\documentclass[journal]{IEEEtran}

\ifCLASSINFOpdf
\else
   \usepackage[dvips]{graphicx}
\fi
\usepackage{url}

\hyphenation{op-tical net-works semi-conduc-tor}

\usepackage{graphicx}

% Added by Aldo
\usepackage{amsmath, amssymb, amsfonts, mathrsfs, amsthm}
\usepackage{xcolor}
\usepackage{dblfloatfix}
\DeclareMathOperator*{\argmax}{arg\,max}
\usepackage[caption=false,font=footnotesize]{subfig}
\newtheorem{proposition}{Proposition}
\usepackage{comment}
\usepackage{braket}  % for \ket{}

\begin{document}

\title{Anytime-Valid Quantum State Tomography via Confidence Sequences}

\author{Aldo Cumitini, Luca Barletta, and  Osvaldo Simeone

\thanks{Aldo Cumitini and Luca Barletta are with the Dipartimento di Elettronica,
Informazione e Bioingegneria, Politecnico di Milano, Milano, Italy. (email:
aldo.cumitini@mail.polimi.it, luca.barletta@polimi.it). O. Simeone is with the Institute for Intelligent Networked Systems (INSI),
Northeastern University London, One Portsoken Street, London E1 8PH,
United Kingdom (email: o.simeone@northeastern.edu). The work of A. Cumitini was supported by the Ermenegildo Zegna Founder's Scholarship 2025-26.  The work of O. Simeone was supported by the European Research Council (ERC) under the European Union’s Horizon Europe Programme (grant agreement No. 101198347), by an Open Fellowship of the EPSRC (EP/W024101/1), and by the EPSRC project (EP/X011852/1). }}

\maketitle

\begin{abstract}
In this letter, we address the problem of developing quantum state tomography (QST) methods that remain valid at any time  during a sequence of measurements. Specifically, the aim is to provide a rigorous quantification of the uncertainty associated with the current state estimate as data are acquired incrementally. To this end, the proposed framework augments existing QST techniques by associating current point estimates of the state with confidence sets that are guaranteed to contain the true quantum state with a user-defined probability. The methodology is grounded in recent statistical advances in anytime-valid confidence sequences. Numerical results confirm the theoretical coverage properties of the proposed anytime-valid QST.
\end{abstract}

\begin{IEEEkeywords}
Quantum state tomography, Bayesian inference, confidence sequences
\end{IEEEkeywords}

\IEEEpeerreviewmaketitle

\section{Introduction}
Quantum state tomography (QST) is a fundamental procedure for characterizing quantum systems, enabling the estimation of an unknown quantum state from measurement outcomes~\cite{yu2022statistical}. Traditional QST methods provide point estimates or credible regions based on accumulated data, but they lack rigorous guarantees in the presence of misspecified models or when inference is performed sequentially. As illustrated in Fig. \ref{fig:bloch_spheres}, the goal of this paper is to augment existing QST techniques by associating current state estimates with confidence sets that are guaranteed to contain the true quantum state with a user-defined probability \emph{uniformly} over time. 

\begin{figure}[!t]
    \centering
    \captionsetup[subfloat]{labelformat=empty,font=normalsize,skip=0pt}

    \subfloat[\normalsize $t = 5$]{%
        \includegraphics[
        width=0.4\linewidth,
        trim = 8.5cm 8.5cm 8.5cm 8.5cm,
        clip
        ]{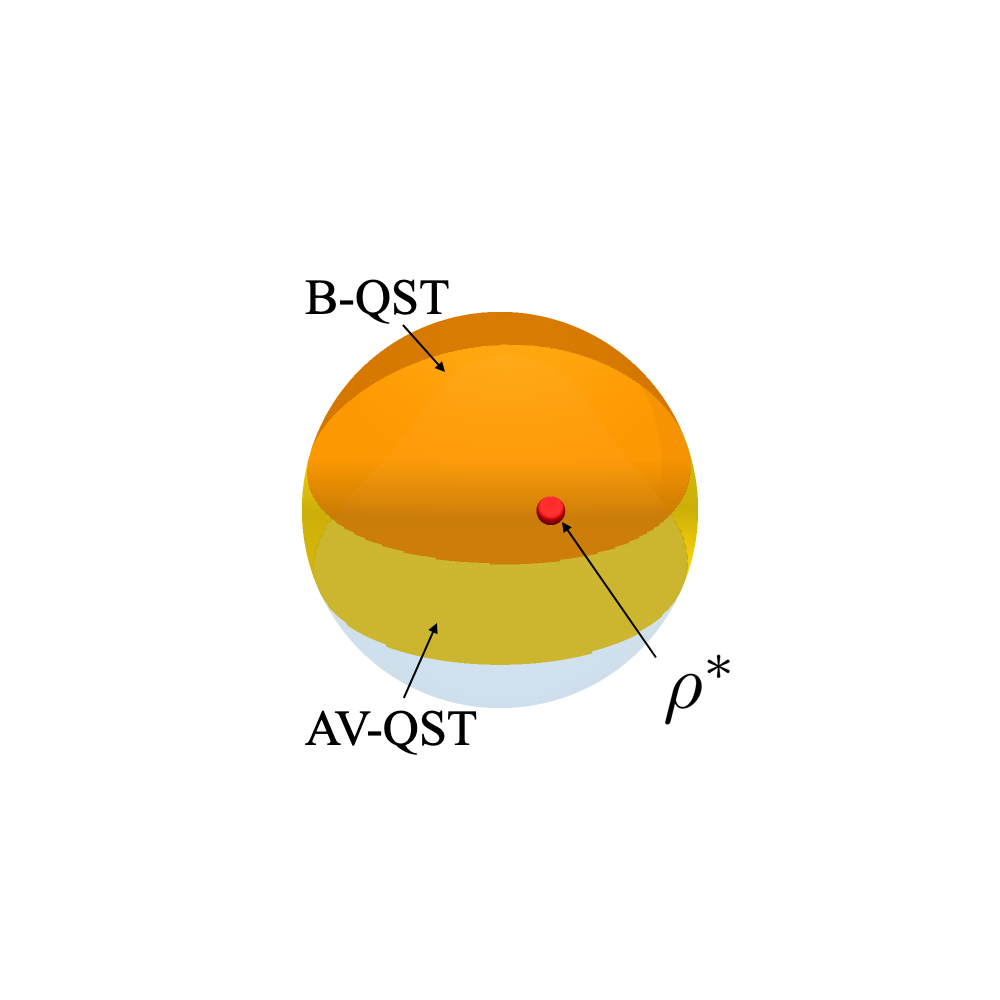}%
        \label{fig:t6}}
    \hfil
    \subfloat[\normalsize $t = 22$]{%
        \includegraphics[
        width=0.4\linewidth,
        trim = 8.5cm 8.5cm 8.5cm 8.5cm,
        clip]{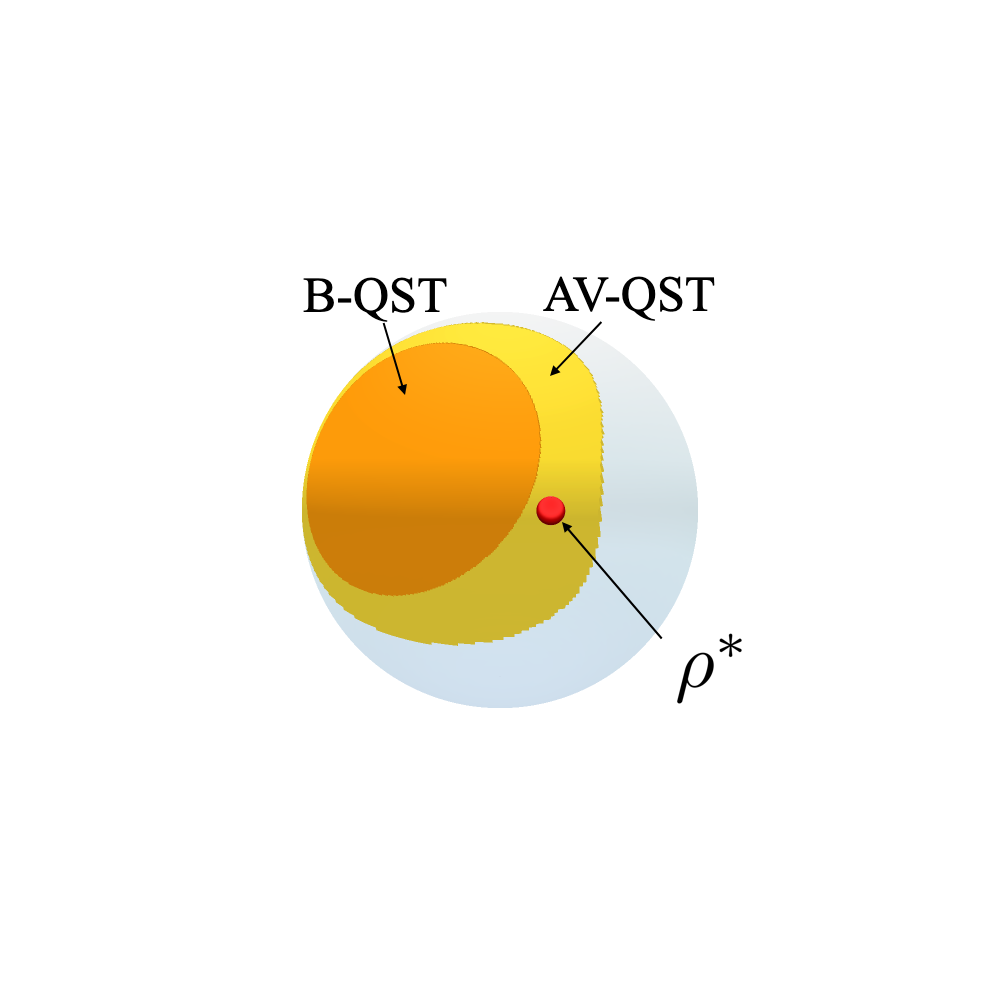}%
        \label{fig:t22}}
    \hfil
    \subfloat[\normalsize $t = 50$]{%
        \includegraphics[
        width=0.4\linewidth,
        trim = 8.5cm 8.5cm 8.5cm 8.5cm,
        clip]{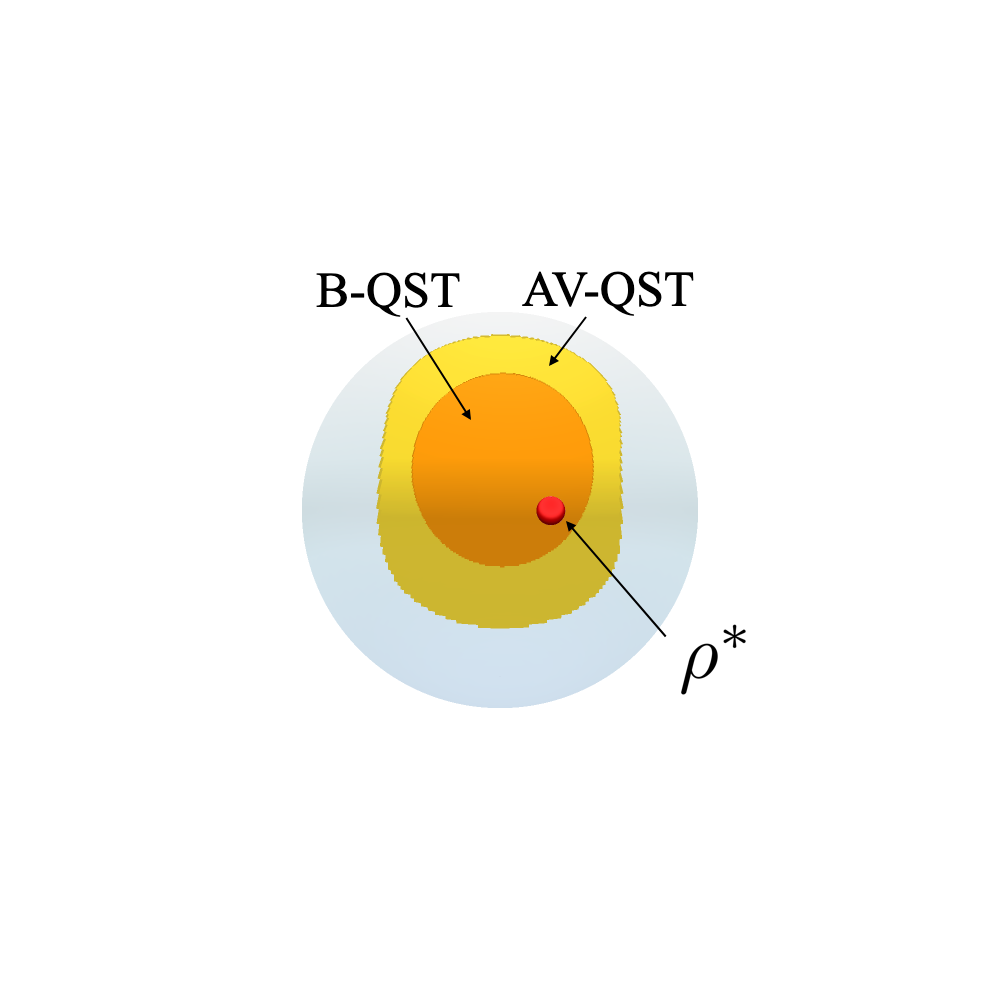}%
        \label{fig:t50}}
    \hfil
    \subfloat[\normalsize $t = 100$]{%
        \includegraphics[
        width=0.4\linewidth,
        trim = 8.5cm 8.5cm 8.5cm 8.5cm,
        clip]{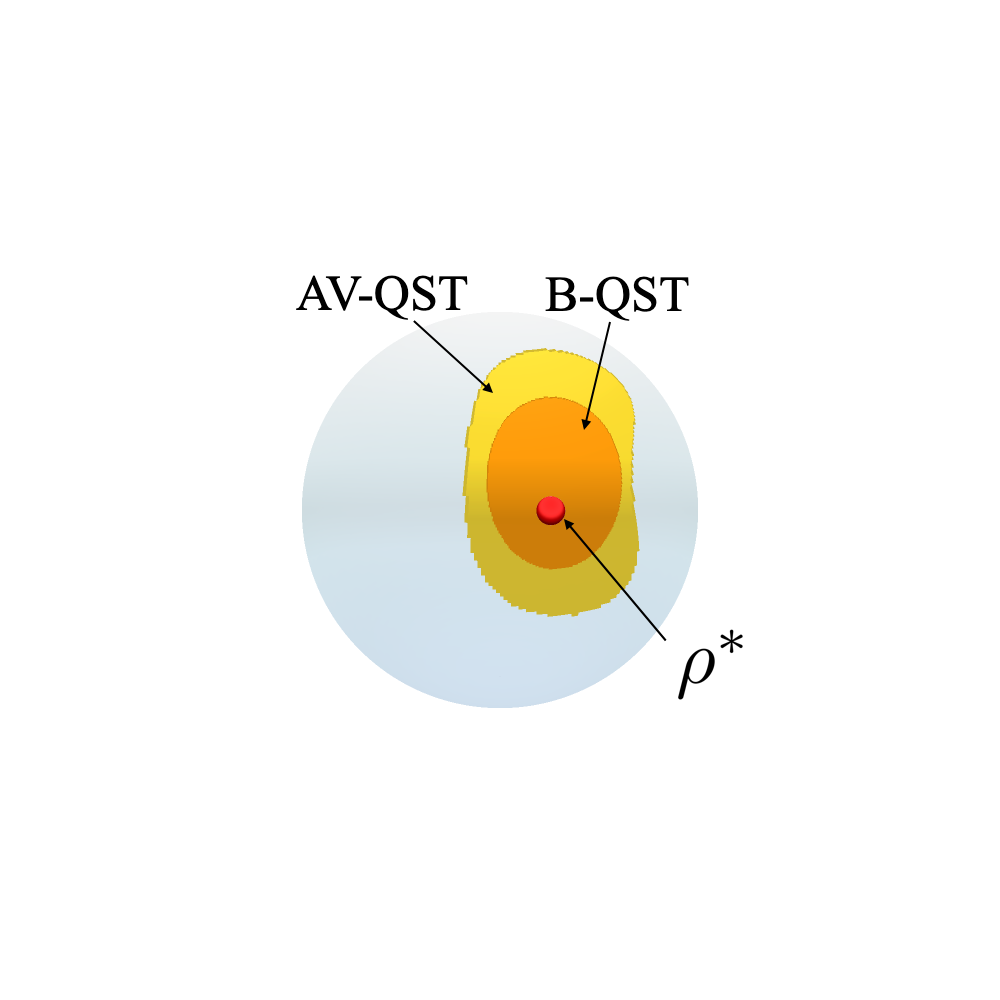}%
        \label{fig:t100}}
    \caption{Uncertainty regions obtained as measurements are carried out on copies of a true state $\rho^*$ over discrete time $t=1,2,...$ using  Bayesian credible regions (B-QST) \cite{Blume_Kohout_2010} and the proposed anytime-valid QST (AV-QST). Bayesian QST (B-QST) only provides asymptotic  statistical guarantee (highlighted at $t=22$); while the proposed AV-QST ensures that the confidence region include the true state $\rho^*$ (red dot) with probability no smaller than $1-\alpha$ \emph{uniformly} over \emph{all} time steps $t$. }
    \label{fig:bloch_spheres}
\end{figure}

QST has been extensively studied from both frequentist and Bayesian perspectives. Maximum likelihood estimation (MLE) for quantum states was explored  in~\cite{PhysRevA.55.R1561}, providing consistent point estimates but no uncertainty quantification. The works~\cite{Blume_Kohout_2010} and \cite{huszar2012adaptive} developed Bayesian methods  to construct posterior estimates and credible regions. The statistical validity of these techniques relies on the correct specification of the prior distribution. Recent advances have explored compressed sensing techniques~\cite{gross2010quantum}, adaptive measurement strategies~\cite{huszar2012adaptive}, and machine learning approaches~\cite{torlai2018neural,carrasquilla2019reconstructing,10.1109/TSP.2025.3546655}. There is also an important line of work on estimating specific properties of quantum states, rather than recovering the full state~\cite{Elben_2022}.

The problem of providing rigorous error bounds for QST has been addressed by several recent works. Notably, the paper~\cite{blume2012robust} introduced confidence regions based on the likelihood ratio that offer  rigorous frequentist coverage guarantees (see also  \cite{Renner} for a related theoretical study). While this method  provides reliable error bounds, it is designed for a \emph{fixed sample size}. That is,  the confidence regions are theoretically  valid only for a predetermined number of measurements decided before data collection begins. This limitation poses challenges in sequential or adaptive scenarios where the experimenter may wish to monitor the state estimate in real-time, make decisions based on intermediate results, or stop the experiment when a desired accuracy is reached.

In contrast, this letter aims at developing  \emph{anytime-valid} confidence sets that maintain their coverage guarantees uniformly over all time steps, including at data-dependent stopping times. This property is critical for sequential decision-making. The proposed method, \emph{anytime-valid quantum state tomography} (AV-QST), augments existing QST methods with rigorous, time-uniform confidence guarantees~\cite{howard2021time}. 

%AV-QST builds on \emph{confidence sequences},  a recent advance in sequential inference. 

\section{Problem Definition}
\label{sec:problem}

QST aims to reconstruct an unknown quantum state 
$\rho^* \in \mathcal{D}(\mathcal{H})$, where $\mathcal{D}(\mathcal{H})$ denotes the set of 
density matrices acting on a Hilbert space $\mathcal{H}$. 
For a $D$-dimensional quantum system, where $D = 2^m$ for an $m$-qubit system, 
the state $\rho$ is represented by a $D \times D$ complex-valued density matrix. 
 By definition, this matrix is positive semidefinite and has unit trace, i.e.,  $\rho \succeq 0$ and $\mathrm{tr}(\rho)=1$.

In a tomographic experiment, information about the true state $\rho^*$ is inferred from a 
sequence of measurement outcomes obtained from identically prepared copies of the system. 
Each measurement is described by a {positive operator-valued measure} (POVM) $\mathcal{M} = \{ \Pi_x\}_{x = 1}^{M}$, 
where each element $\Pi_x$ is a positive semidefinite matrix and  the equality $\sum_{x = 1}^{M} \Pi_x = I$ is satisfied, with $I$ being the $D\times D$ identity matrix. Accordingly, the probability of observing outcome $x$ when the system is in state $\rho$ is given by Born's rule:
\begin{equation}
\mathbb{P}(x|\rho; \mathcal{M}) = \mathrm{tr}(\Pi_x \rho).
\end{equation}

At each discrete time step $t = 1,2,\ldots,$ a POVM $\mathcal{M}_t=\{ \Pi_{t,x}\}_{x = 1}^{M}$ 
is chosen, and an observation $X_t \sim \mathbb{P}(x_t|\rho; \mathcal{M}_t)$ is collected.
Given the current sequence of measurement outcomes $\{ X_{t^{'}} \}_{t^{'}=1}^t$, 
the goal of QST is to maintain an estimate of the underlying true state $\rho^*$.

Standard frequentist methods \cite{PhysRevA.55.R1561} provide point estimates  $\hat{\rho}_t$. In contrast,  Bayesian methods not only produce a point estimate $\hat{\rho}_t$, but they also aim to  quantify the uncertainty associated with this estimate via a credible set $\mathcal{C}_t \subseteq \mathcal{D}(\mathcal{H})$ (see Fig.~\ref{fig:bloch_spheres}). Specifically, the credible sets should ideally cover the true state $\rho^*$ with some user-defined probability.  However, this statistical validity property hinges on the correct specification of prior distribution which can be practically difficult~\cite{Blume_Kohout_2010}.

Thus motivated, in this paper, we study the problem of complementing frequentist or Bayesian point estimates  $\hat{\rho}_t$ by constructing a sequence of {uncertainty sets} 
$\{ \mathcal{C}_t \}_{t \ge 1}$, with $\mathcal{C}_t \subseteq \mathcal{D}(\mathcal{H})$ that are \textit{anytime-valid}~\cite{howard2021time,Ramdas_2025}. That is, at any point in the measurement sequence, including at random stopping times, 
the reported uncertainty set $\mathcal{C}_t$ maintains a statistically rigorous confidence level $1 - \alpha$ 
for containing the true quantum state $\rho^*$, where $1-\alpha$ is a user-defined probability.  Mathematically, this requirement is expressed by the inequality 
\begin{equation}
    \mathbb{P}\!\left( 
    \forall t \ge 1,\, \rho^* \in \mathcal{C}_t
\right) \ge 1 - \alpha.
\label{eq:prob}
\end{equation} Note that this is a frequentist guarantee, in the sense that the probability is evaluated with respect to the distribution of the observations for any fixed, unknown, true state $\rho^*$.

\section{Conventional Methods}
\label{sec:conventional}

In this section, we summarize baseline frequentist and Bayesian QST methods.

\subsection{Frequentist Quantum State Tomography}

The standard frequentist methodology is based on  \emph{maximum likelihood estimation} (MLE) \cite{PhysRevA.55.R1561}. Given the available measurements $\{ X_{t^{'}} \}_{t^{'}=1}^t$, the MLE  maximizes the probability of the observed measurement outcomes, producing a point estimate of the true state. 

The probability of observing the data under a candidate state $\rho$ is \begin{equation}
\mathcal{L}_t(\rho) = \prod_{t^{'}=1}^{t}\mathbb{P}(X_{t'}| \rho; \mathcal{M}_{t'}) = \prod_{t'=1}^{t} \mathrm{tr}(\Pi_{X_{t'}} \rho),
\label{eq:likelihood}
\end{equation}where we have denoted for simplicity $\Pi_{t,X_{t}}=\Pi_{X_{t}}$. The MLE estimate $\hat{\rho}^{\text{MLE}}$ is then evaluated as the maximizer
\begin{equation}
\hat{\rho}^{\text{MLE}}_t = \argmax_{\rho \succeq 0,\, \mathrm{tr}(\rho)=1} \left\{ \log \mathcal{L}_t(\rho) = \sum_{t'=1}^{t} \log \mathrm{tr}(\Pi_{X_{t'}} \rho )\right\},
\label{eq:MLE_LOG}
\end{equation} where $\log \mathcal{L}_t(\rho)$ is the log-likelihood function. 
%
%The MLE provides no quantification of the uncertainty associated  with the point estimate $\hat{\rho}^{\text{MLE}}_t$ \cite{Blume_Kohout_2010}.

\subsection{Bayesian Quantum State Tomography}\label{sec:BQST}

Like frequentist QST, Bayesian QST produces a point estimate of the underlying quantum state, but it  also  quantifies the uncertainty of this estimate via  a credible set \cite{Blume_Kohout_2010}. To start, Bayesian QST requires the specification of a prior distribution $\pi_0(\rho)$ over quantum states, such as the Haar measure. 

Given the available data $\{X_{t^{'}}\}_{t^{'}=1}^t$, at time $t$,  the posterior distribution  $\pi_{t}(\rho)$  can be evaluated  recursively by updating the previous posterior distribution $\pi_{t-1}(\rho)$ with the new observation $X_t$ and the corresponding measurement probability $\mathbb{P}(X_t|\rho; \mathcal{M}_t)=\mathrm{tr}(\Pi_{X_t} \rho)$  as \begin{equation}
\pi_t(\rho) \propto \pi_{t-1}(\rho)\,\mathrm{tr}(\Pi_{X_t} \rho).
\label{eq:update_rule}
\end{equation}

\indent A point estimator of the state is typically defined in \emph{Bayesian QST} (B-QST) via the \emph{posterior mean} 
\begin{equation}
    \hat{\rho}^{\text{B-QST}}_t = \int \rho\, \pi_t(\rho)\, d\rho,
    \label{eq:bme}
\end{equation} 
which can be approximated via Monte Carlo methods \cite{Blume_Kohout_2010, huszar2012adaptive}.

Moreover, B-QST augments the point estimate \eqref{eq:bme} with a \emph{credible set} $\mathcal{C}_t$ \cite[Sec. 2.3.2]{Blume_Kohout_2010} targeting a coverage level $1-\alpha$ separately for each time $t$, i.e.,  $\mathbb{P}\!\left( 
     \rho^* \in \mathcal{C}_t
\right|\{X_{t^{'}}\}_{t^{'}=1}^t) \ge 1 - \alpha$. Note that this is a Bayesian, not frequentist, guarantee, in the sense that the probability is evaluated for a fixed dataset $\{X_{t^{'}}\}_{t^{'}=1}^t$ under the given prior distribution for the random set $\rho^*\sim \pi_0(\rho)$. In practice,  the credible set is obtained based on Gaussian   approximations of the posterior distribution \cite{Blume_Kohout_2010} or Monte Carlo methods \cite{Blume_Kohout_2010, huszar2012adaptive}.

\section{Anytime-Valid Quantum Tomography}
\label{sec:avqst}

B-QST  generally fails to satisfy an anytime-validity requirement akin to (\ref{eq:prob}) for a number of reasons: (\emph{i}) the prior distribution may be misspecified;  (\emph{ii}) the posterior distribution is only asymptotically Gaussian; and (\emph{iii}) even in the ideal case of correct prior specification and exact posterior calculation, guarantees are only provided \emph{separately}  for each time $t$. In this section we introduce anytime-valid QST (AV-QST), a framework for online uncertainty quantification in QST that augments any point predictor of the quantum state with anytime-valid confidence sets satisfying the condition (\ref{eq:prob}).  The approach builds on the statistical methodology of confidence sequences \cite{howard2021time}, which is adapted to leverage Born's rule and to operate in the space of density matrices $\mathcal{D}(\mathcal{H})$.

\subsection{Confidence Sequences}

At each time step $t$, AV-QST assumes access to \emph{any} point predictor $\hat{\rho}_{t-1}$, e.g., the MLE in~\eqref{eq:MLE_LOG} or the posterior mean in~\eqref{eq:bme}. Based on this estimate, given the available data $\{X_{t^{'}}\}_{t^{'} = 1}^t$, AV-QST constructs the \emph{likelihood ratio}
\begin{equation}    
{R}_t(\rho)
= \prod_{t^{'}=1}^t 
  \frac{ \mathrm{tr}(\Pi_{X_{t'}} \hat{\rho}_{t'-1})  }
       { \mathrm{tr}(\Pi_{X_{t'}} {\rho})   }.
\label{eq:Martingale_R}
\end{equation}
The likelihood ratio ${R}_t(\rho)$ in \eqref{eq:Martingale_R} quantifies the available evidence against the candidate state $\rho$ relative to the estimates $\hat{\rho}_{t'-1}$ evaluated at each time $t'=1,\dots,t$. Specifically, a lower value of the ratio ${R}_t(\rho)$ indicates that state $\rho$ is more likely to represent the true state $\rho^*$ as compared to these estimates.
Accordingly, a confidence set can be obtained by collecting all quantum states whose likelihood ratio is sufficiently low, i.e., all states for which there is sufficient evidence in their favor as compared to the current best estimate.
This set is evaluated as
\begin{equation}\label{eq:Ct}
    \mathcal{C}_t^{\text{AV-QST}}(\alpha)
= \left\{ \rho \in \mathcal{D}(\mathcal{H}) : {R}_t(\rho) \le 1/\alpha \right\},
\end{equation}
where the threshold $1/\alpha$ will be seen in Proposition 1 below to guarantee the desired coverage condition \eqref{eq:prob}.

The likelihood ratio (\ref{eq:Martingale_R}), instantiated with the MLE as the point estimate, was also used in \cite{blume2012robust} to construct a confidence set $\mathcal{C}_T(\alpha)$ for a \emph{batch} setting with a fixed sample size $T$. The scheme, referred to as \emph{likelihood ratio-based QST} (LR-QST),  returns the single set \begin{equation}\label{eq:LR}\mathcal{C}^{\text{LR-QST}}_T(\alpha)
= \{ \rho \in \mathcal{D}(\mathcal{H}) : R^{\text{LR-QST}}_T(\rho) \le \lambda_\alpha\}\end{equation} at the last time step, where 
\begin{equation} R^{\text{LR-QST}}_T(\rho)=\frac{\mathcal{L}_T(\hat{\rho}^\text{MLE}_T)}{\mathcal{L}_T(\rho)}
\label{eq:r_lrqst}
\end{equation} 
is a likelihood ratio, and the threshold  $\lambda_\alpha$ is obtained by solving numerically an implicit equation (see \cite[Equation (14)]{blume2012robust}). LR-QST was derived with the aim of offering only coverage guarantees for the last time step $T$, and it does not allow for the use of more powerful point estimates such as the posterior mean (\ref{eq:bme}).

From a computational standpoint, LR-QST is inherently tied to the MLE through the likelihood ratio in \eqref{eq:r_lrqst}, requiring a complexity that scales exponentially with the number of qubits \cite{gross2010quantum}. In contrast, AV-QST can be combined with an arbitrary sequence of point estimators, including computationally efficient estimators \cite{article, cai2025onlinequantumstatetomography}. Furthermore, unlike LR-QST, AV-QST employs the fixed threshold $1/\alpha$, and does not require the additional numerical calibration of the threshold $\lambda_\alpha$. Finally, both AV-QST and LR-QST can be substantially less computationally demanding than Bayesian QST, which generally requires Monte Carlo updates over multiple particles to approximate posterior distributions and credible regions.

\subsection{Theoretical Properties}

As stated in the following proposition, which follows directly from  the properties of confidence sequences \cite{howard2021time}, the sequence of sets  $\{\mathcal{C}_t^{\text{AV-QST}}(\alpha)\}_t$ produced by AV-QST satisfies the desired anytime validity property (\ref{eq:prob}).

\begin{proposition}
For any sequence of point estimates $\{\hat{\rho}_t\}_t$, the sequence of AV-QST uncertainty sets  $\{\mathcal{C}_t^{\text{AV-QST}}(\alpha)\}_t$ in (\ref{eq:Ct}) satisfies the anytime-valid coverage  property (\ref{eq:prob}).
\end{proposition}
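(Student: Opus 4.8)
The plan is to establish the anytime-valid coverage property by showing that the likelihood ratio $R_t(\rho^*)$, evaluated at the \emph{true} state $\rho^*$, is a nonnegative martingale with unit initial value, and then to invoke Ville's maximal inequality. First I would fix the unknown true state $\rho^*$ and work with the natural filtration $\mathcal{F}_t = \sigma(X_1, \dots, X_t)$ generated by the measurement outcomes, with $\mathcal{F}_0$ trivial so that $R_0(\rho^*) = 1$. The central observation is that each point estimate $\hat{\rho}_{t-1}$ is constructed from the data $\{X_{t'}\}_{t'=1}^{t-1}$ alone, hence it is $\mathcal{F}_{t-1}$-measurable (predictable), regardless of how it is produced. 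This predictability is exactly what makes the argument hold for \emph{any} sequence of estimators, as claimed in the statement.

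Next I would verify the martingale property. Writing the one-step recursion
\begin{equation}
R_t(\rho^*) = R_{t-1}(\rho^*) \cdot \frac{ \mathrm{tr}(\Pi_{X_t} \hat{\rho}_{t-1}) }{ \mathrm{tr}(\Pi_{X_t} \rho^*) },
\end{equation}
and conditioning on $\mathcal{F}_{t-1}$, both $R_{t-1}(\rho^*)$ and $\hat{\rho}_{t-1}$ are fixed, while $X_t$ is distributed according to Born's rule $\mathbb{P}(x \mid \rho^*; \mathcal{M}_t) = \mathrm{tr}(\Pi_{t,x}\rho^*)$. Taking the conditional expectation, the denominator cancels the outcome probability, leaving
\begin{equation}
\mathbb{E}\!\left[ R_t(\rho^*) \mid \mathcal{F}_{t-1} \right] = R_{t-1}(\rho^*) \sum_{x=1}^{M} \mathrm{tr}(\Pi_{t,x}\, \hat{\rho}_{t-1}) = R_{t-1}(\rho^*)\, \mathrm{tr}(\hat{\rho}_{t-1}),
\end{equation}
where the last equality uses POVM completeness $\sum_{x} \Pi_{t,x} = I$. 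Since $\hat{\rho}_{t-1}$ is a valid density matrix with $\mathrm{tr}(\hat{\rho}_{t-1}) = 1$, this equals $R_{t-1}(\rho^*)$, so $\{R_t(\rho^*)\}_t$ is a martingale; it is manifestly nonnegative because each trace of a product of positive semidefinite operators is nonnegative.

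Finally, I would apply Ville's inequality, which states that for a nonnegative supermartingale $M_t$ with $M_0 = 1$ one has $\mathbb{P}(\sup_{t} M_t \ge 1/\alpha) \le \alpha$. By the definition (\ref{eq:Ct}), the event that coverage fails at some time is $\{\exists t \ge 1: \rho^* \notin \mathcal{C}_t^{\text{AV-QST}}(\alpha)\} = \{\exists t \ge 1: R_t(\rho^*) > 1/\alpha\}$, which is contained in $\{\sup_{t} R_t(\rho^*) \ge 1/\alpha\}$. Ville's inequality then bounds its probability by $\alpha$, and taking complements yields exactly the anytime-valid guarantee (\ref{eq:prob}).

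The step I expect to be the main obstacle is making the martingale verification airtight, specifically the cancellation in the conditional expectation. Two technical points deserve care: outcomes $x$ with $\mathrm{tr}(\Pi_{t,x}\rho^*) = 0$ occur with probability zero under $\rho^*$ and therefore do not contribute to the sum, so the ratio is well defined almost surely; and the interchange producing $\sum_{x} \mathrm{tr}(\Pi_{t,x}\hat{\rho}_{t-1}) = \mathrm{tr}(\hat{\rho}_{t-1})$ rests \emph{only} on the completeness relation and on $\hat{\rho}_{t-1}$ being a density matrix, not on any further property of the estimator. Everything else is a direct instantiation of the confidence-sequence machinery of \cite{howard2021time}.
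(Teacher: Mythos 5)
Your proof is correct and follows essentially the same route as the paper: you show that $R_t(\rho^*)$ is a nonnegative test martingale with unit initial value under the measurement distribution and then apply Ville's inequality to obtain the time-uniform guarantee~(\ref{eq:prob}). In fact, your write-up is slightly more careful than the paper's appendix, since you make explicit the predictability of $\hat{\rho}_{t-1}$, the role of POVM completeness together with $\mathrm{tr}(\hat{\rho}_{t-1})=1$, and the treatment of zero-probability outcomes, all of which the paper leaves implicit.
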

\begin{proof}
Given the definition of the AV-QST set (\ref{eq:Ct}), the target coverage probability (\ref{eq:prob}) can be written as \begin{equation}
    \mathbb{P}\!( 
    \forall t \ge 1,\, \rho^* \in \mathcal{C}_t^{\text{AV-QST}}(\alpha)
) = \mathbb{P}\!\left( 
    \forall t \ge 1,\, {R}_t(\rho^\ast) \le 1/\alpha
\right),
\label{eq:prob1}
\end{equation}  
where the probability is evaluated with respect to the  distribution of the measurements, i.e.,   $\prod_{t^{'}=1}^{t}\mathbb{P}(x_{t'} |\rho^*; \mathcal{M}_{t'})$.  Furthermore, the likelihood ratio ${R}_t(\rho^\ast)$ can be readily verified to be a test martingale, i.e., a non-negative supermartingale, under the measurement distribution (see Appendix). Ville's inequality is a time-uiform version of Markov's inequality for test martingales \cite{Ramdas_2025}. When applied to the likelihood ratio ${R}_t(\rho^*)$, this inequality directly implies the condition
\begin{equation}
    \mathbb{P}\!\left( 
    \forall t \ge 1,\, {R}_t(\rho^\ast) \le 1/\alpha
\right)\geq 1-\alpha.
\label{eq:prob2}
\end{equation} Combining \eqref{eq:prob2} with (\ref{eq:prob1}) completes the proof.    
\end{proof}

\section{Numerical Results}
\label{sec:results}

\begin{figure*}[!t]
    \centering
    \subfloat[]{%
        \includegraphics[width=0.33\linewidth]{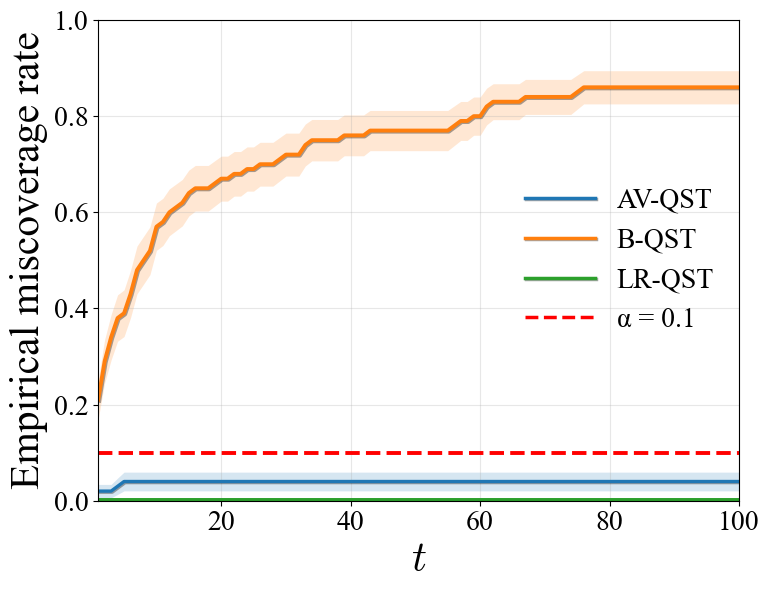}%
        \label{fig:a1qub}}
    \hfil
    \subfloat[]{%
        \includegraphics[width=0.33\linewidth]{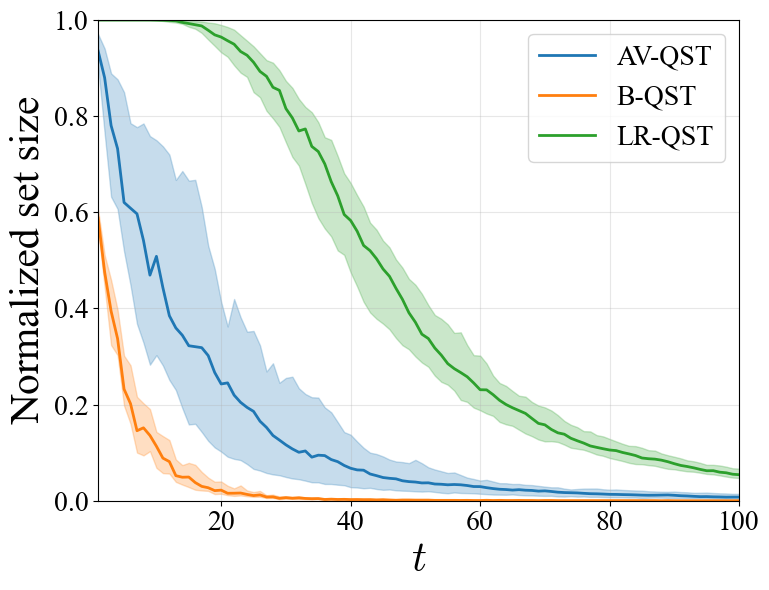}%
        \label{fig:b1qub}}
    \hfil
    \subfloat[]{%
        \includegraphics[width=0.33\linewidth]{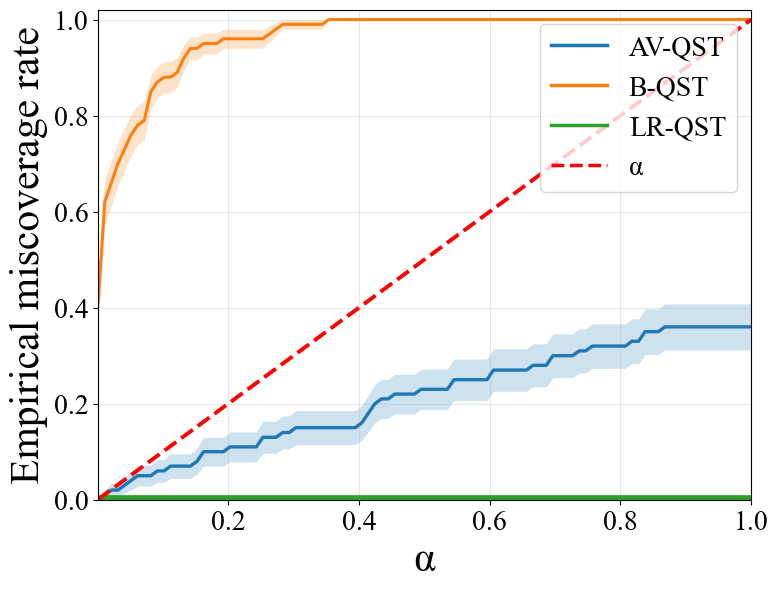}%
        \label{fig:c1qub}}
    \caption{ Empirical miscoverage and normalized set size for B-QST \cite{Blume_Kohout_2010}, LR-QST \cite{blume2012robust}, and AV-QST (this paper) for the two-qubit setting. The lines represent median values and the shaded areas correspond to the interquartile range (25th-75th percentiles) over 100 runs.}
    \label{fig:2qub_results}
\end{figure*}

\begin{figure*}[!t]
    \centering
    \subfloat[]{%
        \includegraphics[width=0.33\linewidth]{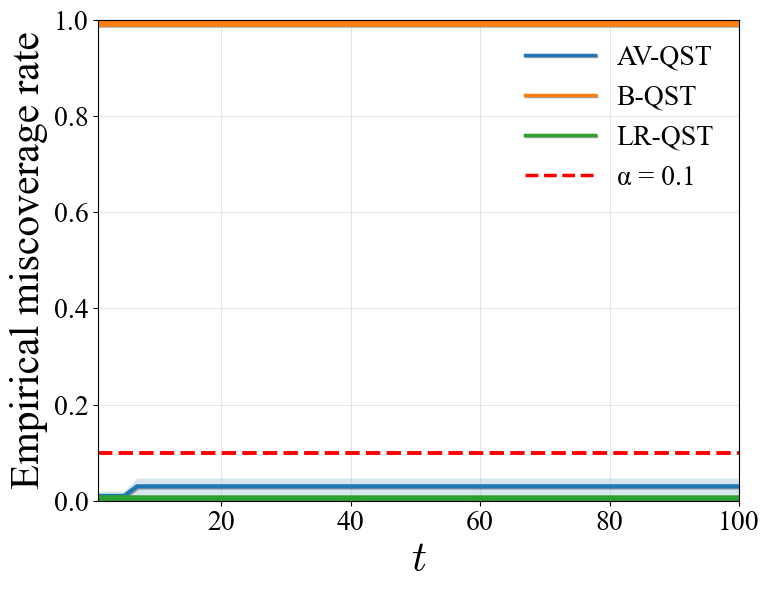}%
        \label{fig:a2qub}}
    \hfil
    \subfloat[]{%
        \includegraphics[width=0.33\linewidth]{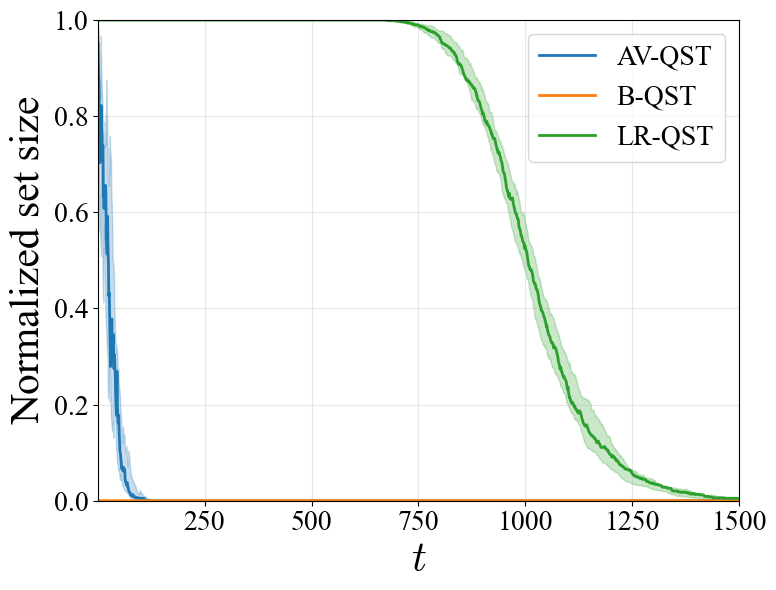}%
        \label{fig:b2qub}}
    \hfil
    \subfloat[]{%
        \includegraphics[width=0.33\linewidth]{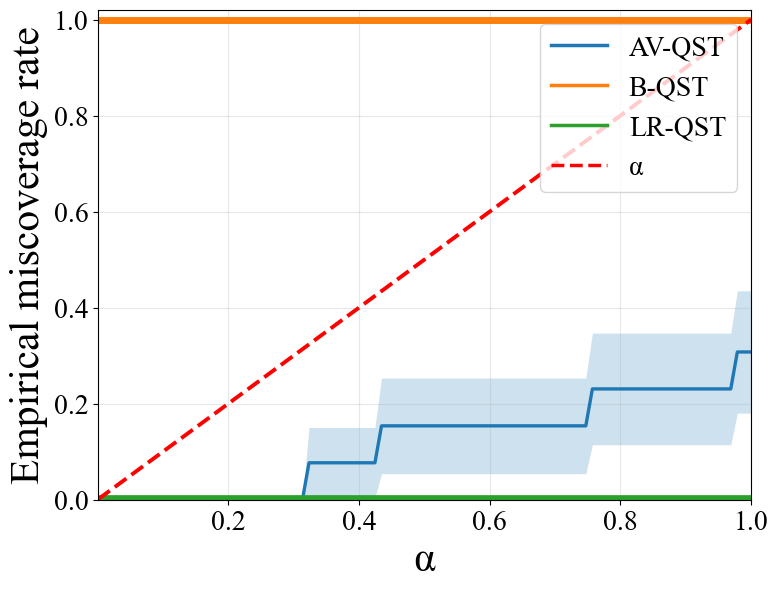}%
        \label{fig:c2qub}}
    \caption{Empirical miscoverage and normalized set size for B-QST \cite{Blume_Kohout_2010}, LR-QST \cite{blume2012robust}, and AV-QST (this paper) for the four-qubit setting. The lines represent median values and the shaded areas correspond  to the interquartile range (25th-75th percentiles) over 15 runs.}
    \label{fig:4qub_results}
\end{figure*}

In this section, we present  numerical results for simulated tomography experiments. All simulations were implemented in Python and executed on a M3 Max Macbook Pro. Throughout, we perform local measurements using the minimal informationally complete POVM (MIC-POVM) separately to the two qubits (see \cite[Eqs. (22), (23)]{Renes_2004}). For each simulated Monte Carlo run, we generate a random pure qubit state $\rho^*$  following the Haar uniform distribution.  LR-QST is run by evaluating the set (\ref{eq:LR}) separately for each time $t=T$. The point estimate $\hat{\rho}_t$ used in AV-QST is the MLE.

 Fig. \ref{fig:bloch_spheres} displays the confidence regions produced by the proposed AV-QST, along with B-QST (Sec. \ref{sec:BQST}) for a given run. The figure shows that the true state $\rho^*$ is not included  in the B-QST region at time $t=22$, reflecting the theoretical property that B-QST does not satisfy the anytime-validity requirement, contrary to the AV-QST region.
 
Fig.~\ref{fig:2qub_results} reports the  results averaged over multiple runs  for the two-qubit case ($D=4$), and Fig.~\ref{fig:4qub_results} for the four-qubit case ($D=16$). For the target miscoverage rate $\alpha = 0.1$, we plot the empirical miscoverage rate and the normalized set size as a function of time $t$ in panels (a) and (b). The empirical miscoverage rate at any time $t$ is estimated by counting the fraction of runs at which the true state $\rho^*$ was not covered by the confidence set up to time $t$. The normalized set size is estimated by evaluating the fraction of the space of density matrices covered by the confidence set via sampling. A
uniform sampling ranging between 2000 and 5000 samples is used, following the procedure
detailed in~\cite[Sec. 2.2.2]{Blume_Kohout_2010}.

In both two-qubit and four-qubit settings, the set sizes for all schemes shrink as $t$ increases and more measurement data are collected.  Moreover, the empirical miscoverage rate for the proposed  AV-QST  remains below the required upper bound $\alpha$ specified in~(\ref{eq:prob2}) at all times, whereas the miscoverage rate of the B-QST credible region grows well above this bound. In contrast, LR-QST almost never exhibits miscoverage because it produces an overly large confidence region, which makes it substantially less informative than AV-QST. This result highlights a general conclusion from our experiments: While LR-QST cannot theoretically guarantee anytime coverage, in practice it is seen to be extremely conservative, yielding regions whose size is much larger than for AV-QST, especially for a larger number of qubits.

 Panels (c) in  Fig. \ref{fig:2qub_results} and Fig. \ref{fig:4qub_results} show the empirical miscoverage rate as a function of the target miscoverage rate $\alpha$ at $t=100$. B-QST sets are consistently smaller than the corresponding confidence sets produced by AV-QST (see panel (b)), but they fail to meet the desired miscoverage rate. In contrast, the LR-QST sets are always larger than the sets produced by AV-QST, confirming the over-conservativeness of LR-QST sets.

\section{Conclusion}  \label{sec:conclusion}

This paper has introduced anytime-valid confidence sets for quantum state  tomography. The theoretical properties of the  proposed approach, termed AV-QST, were seen via experiments to yield practical benefits in terms of coverage and set size. This work contributes to a recent line of research aiming at applying recent advances in statistics to quantum information processing \cite{park2023quantum,zecchin2025quantum}.   Among directions for future work, we mention the development of 
statistically valid change-detection mechanisms~\cite{shekhar2023reducingsequentialchangedetection, pmlr-v202-shekhar23a} and the design of adaptive measurement strategies, which are supported by the proposed framework.

\vspace{-0.2 cm}

\appendix[Proof of Proposition 1]
\vspace{-0.1 cm}

By defining $R_0(\rho^*)=1$, we can write~\eqref{eq:Martingale_R} via the recursive  relation
\begin{equation}
R_t(\rho^*) =  \frac{\mathrm{tr}(\Pi_{X_t} \hat{\rho}_{t-1})}{\mathrm{tr}(\Pi_{X_t} {\rho^*})} R_{t-1}(\rho^*).
\end{equation} Let $\mathscr{F}_t = \sigma(X_1, \dots, X_t)$ be the natural filtration. By the properties of conditional expectation, we then have
\begin{equation}\label{eq:der}
\begin{aligned}
    \mathbb{E}[R_t(\rho^*) \mid \mathscr{F}_{t-1}]
    &= R_{t-1}(\rho^*) \,
    \mathbb{E}\!\left[
        \frac{\mathrm{tr}(\Pi_{X_t} \hat{\rho}_{t-1})}
             {\mathrm{tr}(\Pi_{X_t} {\rho^*})} 
        \;\middle|\; \mathscr{F}_{t-1}
    \right]  \\
    &= R_{t-1}(\rho^*) 
       \sum_{x=1}^M \mathrm{tr}(\Pi_x {\rho^*})
       \frac{\mathrm{tr}(\Pi_x \hat{\rho}_{t-1})}
            {\mathrm{tr}(\Pi_x {\rho^*})} \\
    &= R_{t-1}(\rho^*).
\end{aligned}
\end{equation} The derivation (\ref{eq:der})  shows that the likelihood ratio sequence satisfies the martingale property. Furthermore, in a similar way, one can see that the expected value equals 1, i.e., $\mathbb{E}[R_t(\rho^*)]= \mathbb{E}[R_0(\rho^*)]=1$ for all times $t$. 
Thus the sequence $R_t(\rho^*)$ is a test martingale for the null hypothesis that the state is $\rho^*$. 

\bibliographystyle{IEEEtran}

\bibliography{bibtex/bib/IEEEexample}

\end{document}